\newtheorem{theorem}{Theorem}%[section]
\newtheorem{proposition}[theorem]{Proposition}
\newtheorem{lemma}[theorem]{Lemma}
\theoremstyle{definition}
\newtheorem{definition}[theorem]{Definition}
\newtheorem{observation}[theorem]{Observation}
\newtheorem{example}[theorem]{Example}
\newtheorem{rem}[theorem]{Remark}
\newcommand{\kibitz}[2]{\ifnum\Comments=1{\color{#1}{#2}}\fi}
\newcommand{\N}{\mathbb N}
\newcommand{\R}{\mathbb R}
\newcommand{\leaf}{\mathrm{leaf}}
\newcommand{\F}{\mathcal{F}}
\title{Golden games } 
\author[1]{Urban Larsson \thanks{urban031@gmail.com, partially supported by an Aly-Kaufman fellowship.}}
\author[2]{Yakov Babichenko \thanks{yakov@ie.technion.ac.il}}
\affil[1]{National University of Singapore, Singapore}
\affil[2]{Technion--Israel Institute of Technology, Haifa, Israel}
\date{}                                           % Activate to display a given date or no date
\begin{document}

\maketitle
\begin{abstract}
We consider extensive form win-lose games over a complete binary-tree of depth $n$ where  players act in an alternating manner.
We study arguably the simplest random structure of payoffs over such games where 0/1 payoffs in the leafs are drawn according to an i.i.d. Bernoulli distribution with probability $p$. Whenever $p$ differs from the golden ratio, asymptotically as $n\rightarrow \infty$, the winner of the game is determined. In the case where $p$ equals the golden ratio, we call such a random game a \emph{golden game}. In golden games the winner is the player that acts first with probability that is equal to the golden ratio. We suggest the notion of \emph{fragility} as a measure for ``fairness'' of a game's rules. Fragility counts how many leaves' payoffs should be flipped in order to convert the identity of the winning player. Our main result provides a recursive formula for asymptotic fragility of golden games. Surprisingly, golden games are extremely fragile. For instance, with probability $\approx 0.77$ a losing player could flip a single payoff (out of $2^n$) and become a winner. With probability $\approx 0.999$ a losing player could flip 3 payoffs and become the winner.
\end{abstract}

\section{Introduction}
Random games are games where the payoffs to the players are realized in accordance with a random process. %Players observe the realization before playing the game, and thus the corresponding equilibrium analysis is that of Nash equilibrium rather than that of Bayesian equilibrium. 
In this paper we study a basic class of extensive form random win-lose games, denoted $G_n=G_n(p)$, over complete binary trees of depth $n\in \N_0$ with alternating moves; the payoffs are drawn Bernoulli i.i.d. in each leaf, i.e. each leaf has payoff ``1'' with probability $p$, and otherwise ``0''. The players are called Player~1 and Player~2, and we use the convention that Player~2 makes the last move in the game. In particular, Player~1 starts if the depth of the tree is even, and otherwise Player~2 starts.  The players observe the realization of $G_n$ before playing it.\footnote{That is, the realization of $G_n$ is an extensive-form complete-information zero-sum game.} Player~1 wins if the value of the game is ``1'' and otherwise Player~2 wins.
 %Since the game is a win-lose game, the iid distribution of each leaf is a Bernoulli random variable whose parameter we denote by $p$: Each leaf has payoff ``0'' with probability $p$.  

We focus on large random games when the depth $n$ of  $G_n$ grows to infinity. Our goal is to gain understanding for the following questions: 
\begin{enumerate}
    \item What is the asymptotic value (i.e., equilibrium outcome) of such a game?
    \item Can such a simple random process generate an extensive form game that is ``hard to play'' or repeatedly has ``fair rules''?
\end{enumerate}
 
  We denote by $\varphi = \frac{\sqrt{5}-1}{2}$ the golden ratio.  Regarding the first question we observe in Proposition~\ref{pro:p} the following: If $p>\varphi$, asymptotically, Player~1 wins the game with probability $1$. If $p<\varphi$,  asymptotically Player~2, wins the game with probability $1$. For $p=\varphi$ the value of the game remains uncertain (asymptotically) and the \emph{starting player} wins with probability $\varphi$ (recall the starting player is Player~1 if and only if the tree has even depth). Henceforth, we refer to random games with Bernoulli parameter that is equal the golden ratio ($p=\varphi$) as \emph{golden games}.

The second question requires more clarification in the interpretation of ``hard to play games'' and ``games with fair rules''. Regarding the notion of ``hardness'', it is problematic to apply the standard computational models since the input size of the problem is of size $2^n$. Moreover, the input does not admit succinct representation. Comparable to the input size $2^n$ obviously a backward induction procedure determines the value and a winning strategy for the winner in $2^n$ steps. Another standard complexity model is the query complexity. Here, it is not hard to show exponential hardness of golden games.

Regarding the notion of ``fairness'', it is problematic to state that some win-lose game is fair, because one player can guarantee winning. Nevertheless, it is reasonable to study relaxed notions of ``fairness'' that either assume boundedly rational play or, as we suggest, consider near-by games with very similar rules (payoffs) to argue that a very similar game may flip the identity of the winning player. 

We suggest a simple measure of \emph{game  fragility} to capture `fairness' and `hardness' in our specific scenario. Fragility relies on the notion of Hamming distance only. A game can be viewed as an element of $\{0,1\}^{2^n}$. The set of binary-tree alternating-move games can be partitioned into $W_1 \cupdot W_2$, when $W_i \subset \{0,1\}^{2^n}$ denotes the set of games where Player~$i$ wins. We define the fragility of a game $G\in W_i$ (i.e., a game where player $i$ wins) to be the Hamming distance of $G$ from the set $W_j$ for $j\neq i$. Namely, how many payoffs  shall the losing player switch in order to convert the value of the game. Equivalently, a game is $k$-fragile if the losing player can flip $k$ payoffs to convert the value of the game in favor of him. %Roughly speaking, fragility indicates not only on the fairness of a game, but also on the hardness of playing it. 
Intuitively, a game is fragile if it has a low `$k$-number'. 

Our main result shows that golden games are surprisingly fragile. Table \ref{tab:frag} presents the probability of a golden game to be $k$-fragile for asymptotically large depth.

\begin{table}[ht!]\label{tab:frag}
\caption{Approximate limiting probability of $d$-fragility for golden games.}\vspace{2 mm}
\begin{tabular}{c|c|c|c|c|c|}
\cline{2-6}
$d$                          & 1     & 2     & 3    & 4    & 5    \\ \cline{2-6} 
$\Pr [d$-fragility$]$ & 0.773 & 0.972 & $0.999$ & $1-5.57\times 10^{-5}$ & $1-6.98\times 10^{-7}$ \\ \cline{2-6} 
\end{tabular}
\end{table}

We find the fact that the probability of $1$-fragility is not asymptotically $0$ quite surprising. Recall that the number of leaves of $G_n$ is $2^n$. In most cases, the losing player can switch a single payoff to convert the value of the game in favour of him. In other words, the probability that a golden game will be located precisely on the boundary between $W_1$ and $W_2$ is $\approx 0.773$.
Moreover, one can see from Table~\ref{tab:frag} that the probability of $d$-fragility converges to $1$ extremely fast. We also deduce a theoretical statement that supports this statement; see Proposition~\ref{pro:p}.
Our main result  (Theorem~\ref{thm:main}) is a recursive (in $d$) formula for the asymptotic probability of $d$-fragility for golden games. We also show that fragility is unique for golden games; see Proposition \ref{pro:only-golden}. Namely, for $p\neq \varphi$, the asymptotic probability of a constant fragility equals $0$. 

\subsection{Related literature}

Random games have been used to rationalize mixed Nash equilibria (Harsanyi \cite{Harsanyi}), to study their generic properties in normal-form games, and as a tool to understand the complexity of the set of Nash equilibria. 
Most literature on random games focuses on normal form games and studies properties such as the expected number of Nash equilibria \cite{McLen05,BM05}, the distribution of pure Nash equilibria \cite{Dresher70,Powers90,Stanford95,Papa95,RS,Taka08}, or the maximal number of equilibria \cite{McLen97}.

Similar to our setting, \cite{REGF} studies games over binary trees with alternating moves where the payoff at the leaves are drawn i.i.d. according to some distribution;  \cite{REGF} considers a more general scenario where payoff profiles are not necessarily zero-sum. Unlike our setting, \cite{REGF} consider continuous density distributions, whereas we focus on the most fundamental setting of Bernoulli random payoffs that in particular are zero-sum. In this paper we go one step forward. We not only try to characterize in which cases the equilibrium outcome of the game is certain or uncertain (i.e., the probability of the value converges to a Dirac measure or assigns positive probability to the two outcomes 0 and 1), but also we focus on the question how stable (or in contrary fragile) is the value of the game.

Another remotely related line of literature is on the robustness of classifiers in machine learning to errors in data. 
Szegedy et. al. \cite{SZSBEGF14} show that the state-of-art classifiers are unstable to small adversarial perturbations in the data. This founding has led to an extensive line of research in this direction.
Our result is of similar spirit in a different setting of extensive form games. In our terminology, the classifier is the value of the game, and small perturbation is captured by flipping small number of bits. Our result can be formulated as stating that the value of golden games is not robust to adversarial perturbation of the game (this property is unique for golden games). However, in our context this property is not necessarily interpreted as a shortcoming. 

Related games have also been studied in the context of the disjunctive sum operator in combinatorial game theory (CGT), discussed in various papers on scoring combinatorial games. In particular, ``knotting-unknotting game'' of Pechenik, Townsend, Henrich, MacNaughton, and Silversmith \cite{J} are binary games with 0-1 payoffs and with constant parity of length of play. Observe that the popular CGT normal-play convention is not immediately applicable here, but our model should be generalized using guaranteed scoring combinatorial game theory \cite{LNS, LNNS}, in which the normal-play convention is order embedded. Observe that in the CGT disjunctive sum model, the order of play in distinct components is not necessarily alternating, so the setting in this paper would require a vast generalization. Still our model invokes some questions such as: is fragility related to winning by playing approximately optimal, i.e. if you are not a perfect player, can you still win non-fragile games with high probability?

\section{Model and main results}
This paper, similarly to \cite{REGF} focuses on random \emph{extensive form} games. We consider games on complete binary trees of depth $n$ where moves are done in an alternating manner, and the two players are Player~1 and Player~2. More concretely, we assume that Player~2 acts last, i.e. Player~1 acts in all nodes of height $2m$ for $0 \leq 2m\leq n$. Correspondingly,  Player~2 acts at all nodes of height $2m+1$ for $1\leq 2m+1 \leq n$.

%For example, if the root (the top) is at height $2n$, then Player~2 acts at all nodes in depths $2i$, $1\le i\le n$, and Player~1 acts at all nodes whose depth is $2i-1$.

The 0/1 payoffs in the $2^n$ leaves are i.i.d. Bernoulli random variables with probability of success $p$, with the standard convention that 1 indicates win of Player~1.

This random game is denoted by $G_n=G_n(p) \in \{0,1\}^{2^n}$, and the value of $G_n$ is denoted by $V_n = V_n(p) \in \{0,1\}$, which in itself is a random variable. The fact that we consider the value of the game, assumes that players observe the realization of $G_n(p)$ before playing it. 

For the special case when $p = \varphi = \frac{\sqrt{5}-1}{2}\approx 0.618$ is the golden section we call $G_n$ a \emph{golden game}.

Our first result is a characterization of the asymptotic value of such a random game as a function of $p$. 

\begin{proposition}\label{pro:p} 
Consider $G_n(p)$, with $p\in[0,1]$. 
\begin{itemize}
    
    \item For $p>\varphi$, then $\lim_{n\rightarrow \infty} \Pr[V_n(p)=1]=1$
    
      \item For $p<\varphi$, then $\lim_{n\rightarrow \infty} \Pr[V_n(p)=0]=1$
    
    \item For golden games, the starting player wins with probability $\varphi$, i.e. for all $n\in \N_0$,  $\Pr[V_{2n}(\varphi)=1]=\Pr[V_{2n+1}(\varphi)=0)]=\varphi$
\end{itemize}
\end{proposition}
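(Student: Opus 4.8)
The plan is to collapse the whole tree into a one-dimensional dynamical system by tracking, level by level, the probability that a subtree evaluates to a win for Player~1. The single probabilistic input that makes this work is independence: the two maximal subtrees hanging below any internal node are determined by disjoint sets of leaves, and since the leaf payoffs are i.i.d., the two subtree values are independent. Everything else is an analysis of the resulting scalar recursion together with some bookkeeping of parities.

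First I would set up the recursion. Let $p_k$ be the probability that a subtree of depth $k$ has value $1$, so $p_0 = p$. A node of odd height is a Player~2 (minimizing) node, where the value is the logical AND of its two independent children, giving $p_{2m+1} = p_{2m}^2$; a node of even positive height is a Player~1 (maximizing) node, where the value is the OR of its children, giving $p_{2m+2} = 1 - (1 - p_{2m+1})^2$. Composing the two steps yields a single map on even levels, $p_{2m+2} = f(p_{2m})$ with $f(x) = 1 - (1 - x^2)^2 = x^2(2 - x^2)$, so that $p_{2N} = f^{N}(p)$ and $p_{2N+1} = (f^{N}(p))^2$.

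Next I would analyze $f$ on $[0,1]$. The key computation is the factorization $f(x) - x = -x(x-1)(x^2 + x - 1)$, whose roots pin down the fixed points $0$, $\varphi$, and $1$; here $\varphi$ is exactly the positive root of $x^2 + x - 1$, which is where the golden ratio enters. Since $f'(x) = 4x(1 - x^2) \ge 0$ on $[0,1]$, the map is monotone increasing, and the sign of the factorization shows $f(x) < x$ on $(0,\varphi)$ and $f(x) > x$ on $(\varphi,1)$. Monotone convergence for increasing maps then gives: for $p \in (0,\varphi)$ the orbit $f^{N}(p)$ decreases to $0$, for $p \in (\varphi,1)$ it increases to $1$, and for $p = \varphi$ it is stationary. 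One can double-check the instability of $\varphi$ via $f'(\varphi) = 4\varphi^2 = 4(1-\varphi) > 1$, but the global statement needs the factorization, not just this linearization.

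Finally I would translate back through the parity of $n$. For even depth $n = 2N$ the root is a Player~1 node and $\Pr[V_{2N} = 1] = p_{2N} = f^{N}(p)$; for odd depth $n = 2N+1$ the root is a Player~2 node and $\Pr[V_{2N+1} = 1] = p_{2N+1} = (f^{N}(p))^2$. Feeding in the three regimes gives the first two bullets immediately (for $p>\varphi$ both $f^{N}(p)$ and its square tend to $1$; for $p<\varphi$ both tend to $0$). For the golden case $f^{N}(\varphi) = \varphi$ for every $N$, so $\Pr[V_{2N}(\varphi)=1] = \varphi$ and $\Pr[V_{2N+1}(\varphi)=1] = \varphi^2 = 1-\varphi$, i.e. $\Pr[V_{2N+1}(\varphi)=0] = \varphi$; since Player~1 starts exactly when the depth is even, in both parities the starting player wins with probability $\varphi$. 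I expect the only genuine subtlety to be the global convergence claim---promoting the local repulsion at $\varphi$ to convergence on the entire intervals $(0,\varphi)$ and $(\varphi,1)$---which is exactly what the monotonicity of $f$ together with the sign of $f(x)-x$ delivers; the independence step and the parity bookkeeping are routine.
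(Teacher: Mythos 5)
Your proof is correct and follows essentially the same route as the paper: both collapse the tree to the one-dimensional iteration of the two-level composition map (your $f(x)=1-(1-x^2)^2$ is the paper's $g$ after the change of variable $x\mapsto 1-x$), identify the fixed points $0,\varphi,1$, and use monotonicity of the orbit to get convergence, with the golden case handled by $f(\varphi)=\varphi$. Your version is if anything slightly tidier---the factorization $f(x)-x=-x(x-1)(x^2+x-1)$ replaces the paper's cubic estimate, and you treat the odd-depth case explicitly, which the paper leaves implicit.
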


Namely, the proposition states that for $p$ above the golden ratio, Player~1 typically wins such a random game. For $p$ below the golden ratio, Player~2 typically wins such a random game. %For the special case when $p = \varphi = \frac{\sqrt{5}-1}{2}\approx 0.618$ is the golden section we call $G_n$ a \emph{golden game}. 
For golden games the identity of the winner remains uncertain even for large values of $n$; the probability of Player~1 being the winner depends only on the parity of $n$.

 Now we proceed to our main result on the fragility of golden games.
			
For $i\in\{1,2\}$, denote by $W^i_n\subset \{0,1\}^{(2^n)}$ the set of all games $G_n$ with value $2-i$ (i.e., those games where Player~$i$ wins).\footnote{This notion does not depend on that $G_n$ is defined as a random game.}%, and let  $W^i=\cup_{n\in\N} W_n^i$. 
%We denote by $W^2_n\subset \{0,1\}^{(2^n)}$ to be the set of all games $G$ with value 0 (i.e., those games where Player 2 wins). 
\begin{definition}
Let $d\in \N$. A game $G_n\in W^i_n$ is $d$-fragile, denoted $G_n\in\F(d)$, if the Hamming distance of $G_n$ from the set $W^{3-i}_n$ is at most $d$.\footnote{Note that this definition does not use that $G_n$ is a random game.}
\end{definition}

Note that, for any $p$, $G_n(p)$ is trivially $2^n$-fragile. 

\begin{definition}
 Let $F_n(d):=\Pr[G_n(\varphi)\in\F(d)]$ be the probability that a golden game of depth $n$ is $d$-fragile. \end{definition}

\begin{example}\label{ex:fra}
The 1-fragility of the trivial game $G_0(\varphi)$ is $F_0(1)=1$, because if the single payoff is 0 (probability $\varphi$) then Player~1 can win by flipping it, and if the single payoff is 1 (probability $\varphi^2$), then Player~2 wins by flipping it. Further, the probability of 1-fragility of $G_1(\varphi)$ is
\begin{align}
F_1(1) &= \varphi \Pr[G_1\in \F(1)\mid V_1=0] + \varphi^2  \Pr[G_1\in \F (1)\mid V_1=1]\notag\\
&=\varphi (1-\Pr[G_1\not\in \F(1)\mid V_1=0]) + \varphi^2\cdot 1\notag\\ %(1-\Pr[G_1\not\in \F (1)\mid V_1=1])\notag\\
&=\varphi (1-\varphi^3)+\varphi^2=1-\varphi^4\label{eq:vp4}
\end{align}
For example $\Pr[G_1\not\in \F(1)] = \varphi^4$, and since we condition on that Player~2 wins $G_1$, according to Proposition~\ref{pro:p} third item, the first term in \eqref{eq:vp4} is correct. 

% Similarly, the probability of 1-fragility of $G_2(\varphi)$ is
% \begin{align}
% F_2(1) &= \varphi \Pr[G_2\in \F(1)\mid V_2=1] + \varphi^2  \Pr[G_2\in \F (1)\mid V_2=0]\notag\\
% &=\varphi (1-\Pr[G_2\not\in \F(1)\mid V_2=1]) + \varphi^2 (1-\Pr[G_2\not\in \F (1)\mid V_2=0])\notag\\
% &=\varphi ()+\varphi^2()=1-\varphi^4,
% \end{align}
% again by using Proposition~\ref{pro:p}, third item.
%For the first term, since Player~1 wins, there is at least one 1-payoff, but Player~2 cannot win by flipping a single payoff if both are ``1''. Hence the first term is $\varphi\times (1-\varphi^4)/\varphi = 1-\varphi^4$. For the second term, since Player~2 wins, both payoffs are ``0''. Player~1 can win by shifting one of them, which happens with probability 1. Therefore the second term is  $\varphi^2\times 1/\varphi=\varphi$. Thus $F_1(1)=2\varphi^3+\varphi=1-\varphi^4$. 
\end{example}

\begin{theorem}\label{thm:main}
The asymptotic $d$-fragility of a golden game exists and it satisfies for all $d\in\N$, 
\begin{align*}
    \lim_{n\rightarrow\infty} F_n(d)%&=\varphi (1-\xi_d)+\varphi^2(1-\xi_d^2)\\
    &=1-\varphi \xi_d -\varphi^2\xi_d^2,
\end{align*}
where $\xi_d\in (0,1)$ is the smaller root of the second degree polynomial $\varphi^3H(d)-x+2\varphi^2x^2$, where $H(1)=1$, and for $d>1$, 
$$H(d) = 1-\sum_{r+s=d}(1-\xi_r^2)(1-\xi_s^2)+\sum_{r+s=d-1}(1-\xi_r^2)(1-\xi_s^2)$$

% where $C(1)=1$, and otherwise 
%  \begin{align*}
 %   C(d)&=\sum_{r+s=d, r,s>0}(1-\xi_r^2)(1-\xi_s^2)
%\end{align*}
\end{theorem}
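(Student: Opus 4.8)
The plan is to track, for each subtree, not only its value but the \emph{cost to flip it}. For a subtree $T$ whose root sits at height $k$, let $V(T)\in\{0,1\}$ be its value and let $C(T)$ be the minimal number of leaf-payoffs inside $T$ that must be flipped to change $V(T)$; thus $G_n\in\F(d)$ iff $C(G_n)\le d$, and $F_n(d)=\Pr[C(G_n)\le d]$. The quantity $C$ obeys a clean recursion down the tree once we note that a node of even height is a maximizer ($\mathrm{OR}$, owned by Player~1) and a node of odd height is a minimizer ($\mathrm{AND}$, owned by Player~2). For an $\mathrm{OR}$-node with independent children $L,R$: if $V=1$ then flipping forces every value-$1$ child down to $0$, so $C=\sum_{X\in\{L,R\}:\,V(X)=1}C(X)$; if $V=0$ then it suffices to raise one child, so $C=\min(C(L),C(R))$. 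The $\mathrm{AND}$-node obeys the $0\leftrightarrow1$ dual rule. Since the children are independent, these rules push forward the joint law of $(V,C)$ from height $k-1$ to height $k$, and I would work with the tail masses $\Pr[V=1,\,C>d]$ and $\Pr[V=0,\,C>d]$.

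Two structural facts make this tractable. First, for $p=\varphi$ the value law is an \emph{exact} $2$-cycle for every $n$: $\Pr[V_n=1]$ equals $\varphi$ at even heights and $\varphi^2$ at odd heights (this is the fixed point $\varphi^2=1-\varphi$ already used in Proposition~\ref{pro:p}), so no asymptotics are needed for the value itself. Second, there is a duality: swapping $0\leftrightarrow1$ interchanges $\mathrm{OR}$ with $\mathrm{AND}$ and even with odd heights, so the conditional law of $C$ given ``the acting player wins'' is the same at both parities. I would therefore define $\xi_d:=\lim_n\Pr[\,C(T)>d\mid\text{the root owner wins}\,]$, the limiting probability that a subtree is \emph{not} $d$-fragile conditioned on the value being in the mover's favor (with $\xi_0=1$). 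Combining the favorable case, which has mass $\varphi$ and conditional not-fragile probability $\xi_d$, with the unfavorable case, which has mass $\varphi^2$ and, since a value-$0$ $\mathrm{OR}$-node can be flipped only by raising one of its two independent children, contributes conditional not-fragile probability $\xi_d^2$, gives
\begin{equation*}
1-\lim_n F_n(d)=\varphi\,\xi_d+\varphi^2\,\xi_d^2 .
\end{equation*}

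To pin down $\xi_d$ I would expand the $\mathrm{OR}$-node ``value $1$, not $d$-fragile'' event over its odd children. The one-child-wins case contributes $2\varphi\cdot\varphi^2\xi_d^2$ (one child value $1$ with $C>d$, the other value $0$), and the both-children-win case contributes $\varphi^4-\sum_{i+j\le d}a(i)a(j)$, where $a(c):=\Pr[V=1,\,C=c]$ for an odd subtree. Using the $\min$-rule and duality, $a(c)=\varphi^2(\xi_{c-1}^2-\xi_c^2)$, and writing $g(c)=\xi_{c-1}^2-\xi_c^2$ one gets the self-convolution $\sum_{i+j\le d}g(i)g(j)$. A discrete Abel/generating-function identity, using that $\sum_{i\le m}g(i)=1-\xi_m^2$ telescopes, rewrites this as $\sum_{r+s=d}(1-\xi_r^2)(1-\xi_s^2)-\sum_{r+s=d-1}(1-\xi_r^2)(1-\xi_s^2)$. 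Dividing the resulting balance equation by $\varphi$ yields exactly $2\varphi^2\xi_d^2-\xi_d+\varphi^3H(d)=0$ with the stated $H(d)$ (and $H(1)=1$ since the double sums vanish), so $\xi_d$ is a root of that quadratic, computed from $\xi_1,\dots,\xi_{d-1}$ by induction on $d$.

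The main obstacle is the analytic part: proving the finite-$n$ tails actually converge and selecting the correct root. The value being an exact $2$-cycle reduces each step (after using duality) to iterating the scalar quadratic map $x\mapsto 2\varphi^2x^2+\varphi^3H(d)$, whose two fixed points are the roots of the displayed quadratic; I expect the smaller root $\xi_d\in(0,1)$ to be attracting (its multiplier $4\varphi^2\xi_d<1$) and the larger repelling, so the crux is to verify that the finite-depth iterates start in, and remain in, the basin of the smaller root, and to carry this out simultaneously with the induction on $d$ that feeds $\xi_1,\dots,\xi_{d-1}$ into $H(d)$. As a sanity check the scheme gives $\xi_1=\varphi/2$ (the larger root being exactly $1$, since $\varphi^3=2\varphi-1$), hence $\lim_nF_n(1)=1-\varphi^2/2-\varphi^4/4\approx0.773$, matching Table~\ref{tab:frag}.
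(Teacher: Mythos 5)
Your proposal is correct, and its skeleton is the same as the paper's: condition on the identity of the winner, derive the two-level recursion (an even step of the form $\varphi^3H(d)+2\varphi^2(\cdot)$ and an odd squaring step), identify $\xi_d$ as the smaller, attracting root of $\varphi^3H(d)-x+2\varphi^2x^2$, and recombine the ``owner wins''/``owner loses'' cases with weights $\varphi$ and $\varphi^2$ to get $F(d)=1-\varphi\xi_d-\varphi^2\xi_d^2$. The genuine difference is how the factor $H(d)$ is justified. The paper treats the ``both children have value $1$'' case by inclusion--exclusion over ordered splits $(r,s)$, $r+s=d$, of the flipping budget, resting on Lemma~\ref{lem:counting}, which is stated abstractly about minima of pairs of splits and proved only as ``standard combinatorics''. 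You instead introduce the flip-cost random variable $C(T)$ with its sum/min recursion at OR/AND nodes, write its exact limiting conditional distribution $a(c)=\varphi^2(\xi_{c-1}^2-\xi_c^2)$, and obtain $H(d)$ from the telescoping convolution identity $\sum_{i+j\le d}g(i)g(j)=A_d-A_{d-1}$ with $A_d=\sum_{r+s=d}(1-\xi_r^2)(1-\xi_s^2)$; this Abel-summation route is cleaner, makes precise what the paper's lemma leaves informal, and your exact evaluation $\xi_1=\varphi/2$ (with larger root exactly $1$) sharpens the numerics of Proposition~\ref{pro:1fr}. Two caveats. First, your ``duality'' is loosely worded: literally swapping $0\leftrightarrow1$ turns Bernoulli$(\varphi)$ payoffs into Bernoulli$(\varphi^2)$ payoffs, so it is not a symmetry of the measure; what actually equates the two parities is that the mover wins with probability $\varphi$ at every level (the fixed-point identity $1-\varphi=\varphi^2$), which makes the even- and odd-rooted conditional recursions identical up to an index shift --- the paper is equally brief here, invoking only ``symmetry''. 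Second, the convergence and root-selection step that you defer is the one part you have not executed; the paper settles it by monotone boundedness of the iterates started from $\alpha_0=0$ (they in fact increase toward the smaller fixed point, although the paper writes ``decreasing''), combined with the induction on $d$ that yields $H(d,2n)\to H(d)$ --- exactly the plan you outline, so this is a gap of execution rather than of ideas.
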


\begin{rem}
In fact, the proof of Theroem~\ref{thm:main} provides more information about the problem than just asymptotic fragility. It also provides  precise answers for notions as asymptotic fragility conditional on the identity of the winner, and conditional on the parity of $n$. Moreover, recursive formulas as a function of the depth $n$ (rather than asymptotic) are deduced as well. 
For clarity of presentation we relegate these additional insights to the proof.
\end{rem}

Table \ref{tab:frag} demonstrates the calculations of Theorem \ref{thm:main} for $d\leq 5$.
Although it can be clearly seen from the table that the probability of $d$-fragility tends to 1 very quickly, this statement requires a theoretical proof. Indeed we have the following proposition.

\begin{proposition}\label{pro:d->inf}
We have
$\lim_{d\rightarrow \infty} F(d)=1$.\footnote{Note that this is a double-limit statement because $F(d)$ in itself is an asymptotic probability. If we convert the limits, the statement remains correct but trivial. Indeed for every fixed $n$, when $d \rightarrow \infty$ from sufficiently large $d$ we have $d\geq 2^n$ and obviously by flipping all payoffs the game in favour of one player makes him win.} 
\end{proposition}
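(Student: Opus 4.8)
The plan is to extract from Theorem~\ref{thm:main} that $F(d)\to 1$ is equivalent to $\xi_d\to 0$, and then to prove the latter by passing to the limit in the recursion defining $\xi_d$. First I would record monotonicity: since $\F(d)\subseteq\F(d+1)$ as subsets of $\{0,1\}^{2^n}$, we have $F_n(d)\le F_n(d+1)$ for every $n$, hence $F(d)\le F(d+1)$. Thus $F(d)$ is nondecreasing and bounded above by $1$, so it converges. Writing $\psi(x)=\varphi x+\varphi^2x^2$, which is a strictly increasing bijection of $[0,1]$ onto itself, Theorem~\ref{thm:main} reads $1-F(d)=\psi(\xi_d)$, so $\xi_d=\psi^{-1}(1-F(d))$ is nonincreasing and converges to some $\xi_\infty\in[0,\xi_1]\subset[0,1)$. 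It remains to prove $\xi_\infty=0$, since then $F(d)=1-\psi(\xi_d)\to 1-\psi(0)=1$.

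Next I would compute the limit of $H(d)$. Put $g_d:=1-\xi_d^2$; this sequence is nondecreasing with limit $g_\infty:=1-\xi_\infty^2$ and nonnegative, summable increments $b_s:=g_s-g_{s-1}$, with $\sum_{s\ge 2}b_s=g_\infty-g_1<\infty$. With $S(d):=\sum_{r+s=d}g_rg_s$ (sums over $r,s\ge 1$) we have $H(d)=1-\bigl(S(d)-S(d-1)\bigr)$. A telescoping re-indexing gives
\[
S(d)-S(d-1)=g_1g_{d-1}+\sum_{s=2}^{d-1}g_{d-s}\,b_s .
\]
Since $g_{d-s}\to g_\infty$ for each fixed $s$ and $0\le g_{d-s}b_s\le b_s$ with $\sum_s b_s<\infty$, dominated convergence for series yields $S(d)-S(d-1)\to g_1g_\infty+g_\infty(g_\infty-g_1)=g_\infty^2$. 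Hence $H(d)\to 1-g_\infty^2=\xi_\infty^2(2-\xi_\infty^2)$.

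Finally I would close the loop. Passing to the limit in $2\varphi^2\xi_d^2-\xi_d+\varphi^3H(d)=0$ gives $2\varphi^2\xi_\infty^2-\xi_\infty+\varphi^3\xi_\infty^2(2-\xi_\infty^2)=0$. Using $\varphi^2=1-\varphi$ and $\varphi^3=2\varphi-1$ (equivalently $1+\varphi=1/\varphi$) this factors as
\[
\xi_\infty\bigl(2\varphi\,\xi_\infty-\varphi^3\xi_\infty^3-1\bigr)=0 .
\]
The map $\xi\mapsto 2\varphi\xi-\varphi^3\xi^3$ is strictly increasing on $[0,1]$ and equals $1$ only at $\xi=1$ (indeed $2\varphi-\varphi^3=1$), so the second factor does not vanish on $[0,1)$. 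As $\xi_\infty\in[0,1)$, we conclude $\xi_\infty=0$, and therefore $F(d)\to 1$.

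The hard part is the limit of $H(d)$: the convolutions $S(d)$ themselves diverge (like $d\,g_\infty^2$), and only their successive difference converges, so the argument must interchange a limit with an infinite summation. Monotonicity of $\xi_d$ — which I obtain for free from the monotonicity of fragility in the first step — is precisely what makes the increments $b_s$ summable and legitimizes this interchange; without it one obtains only crude two-sided bounds. The remaining manipulations are algebraic simplifications based on the golden-ratio identities $\varphi^2=1-\varphi$ and $\varphi^3=2\varphi-1$.
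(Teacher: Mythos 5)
Your proof is correct, but it takes a genuinely different route from the paper's. The paper argues by contradiction: assuming $\xi_d\downarrow\delta>0$, it bounds $H(d)$ from above by retaining only a \emph{single} split $r+s=d$ with both parts large, giving $H(d)\le 1-(1-\delta-\epsilon)^2$ for large $d$; plugging this into the fixed-point relation $\xi_d=\varphi^3H(d)+2\varphi^2\xi_d^2$ then forces $\xi_d<\delta$, a contradiction (the required strict inequality holds for every $\delta\in(0,1)$ because $1-2\varphi^3=2\varphi^2-\varphi^3=3-4\varphi$). You instead compute the \emph{exact} limit of $H(d)$, using all the splits: your telescoping identity plus dominated convergence gives $S(d)-S(d-1)\to g_\infty^2$ with $g_\infty=1-\xi_\infty^2$, hence $H(d)\to 1-(1-\xi_\infty^2)^2$, and you then solve the limiting equation in closed form, factoring it via the golden-ratio identities as $\xi_\infty\bigl(2\varphi\xi_\infty-\varphi^3\xi_\infty^3-1\bigr)=0$ and ruling out the second factor on $[0,1)$ (indeed $2\varphi-\varphi^3=1$ makes $\xi=1$ the unique root of that factor in $[0,1]$). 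The trade-off: your argument pays the price of interchanging a limit with an infinite sum, which your monotonicity observation legitimizes, while the paper's one-split bound avoids that entirely and is shorter; in return, your version is more self-contained and more informative --- you obtain monotonicity of $\xi_d$ cleanly from the set inclusion $\F(d)\subseteq\F(d+1)$ rather than deferring to a claim inside the proof of Theorem~\ref{thm:main}, you avoid the epsilon bookkeeping of the contradiction, and you identify the limiting functional equation exactly (yielding $H(d)\to 0$ as a byproduct, which is precisely the statement the paper reduces to). Both proofs share the same skeleton: monotone convergence of $\xi_d$ to some $\xi_\infty\in[0,1)$, followed by showing that the fixed-point relation admits no such limit other than $\xi_\infty=0$.
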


For $p > \varphi$, Player~1 typically wins a large random game; see Proposition~\ref{pro:p}. Are these games fragile? The following proposition provides a negative answer to this question.

We denote by $F_n(d,p):=\Pr[G_n(p) \text{ is } d \text{-fragile}]$ the probability of a random game with Bernoulli parameter $p$ (not necessarily a golden game) to be $d$-fragile.

\begin{proposition}\label{pro:only-golden}
For every $p\neq \varphi$ and every fixed $d$ we have $\lim_{n\rightarrow \infty} F_n(d,p)=0$. 
\end{proposition}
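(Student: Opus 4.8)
The plan is to exploit the value-concentration of Proposition~\ref{pro:p} together with a recursion for the joint law of the value and the fragility of a subtree, and then to induct on $d$. By duality it suffices to treat $p>\varphi$ in detail; the case $p<\varphi$ follows from the same argument with the roles of the value $1$ and the value $0$ (equivalently, of the $\mathrm{AND}$ and $\mathrm{OR}$ gates) interchanged. Write $a_h=\Pr[V_h=1]$ and $b_h=1-a_h$, so that by Proposition~\ref{pro:p}, $a_h\to1$ and $b_h\to0$ as $h\to\infty$. Let $\Phi_h$ denote the fragility (the minimal number of leaf-flips that changes the value) of a height-$h$ subtree, and set $s_h(d)=\Pr[\Phi_h\le d\mid V_h=1]$ and $t_h(d)=\Pr[\Phi_h\le d\mid V_h=0]$, so that $F_n(d,p)=a_n s_n(d)+b_n t_n(d)$. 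Since $b_n t_n(d)\le b_n\to0$, the whole statement reduces to proving that $s_n(d)\to0$ for every fixed $d$.

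First I would record the gate recursions, using that the two child subtrees are independent. A height-$h$ subtree is rooted at an $\mathrm{AND}$ gate when $h$ is odd and at an $\mathrm{OR}$ gate when $h$ is even. Recalling that at an $\mathrm{AND}$ gate whose output is $1$ (both children $1$) one has $\Phi=\min(\Phi_L,\Phi_R)$, and that at an $\mathrm{OR}$ gate whose output is $1$ one has $\Phi=\Phi_L+\Phi_R$ when both children are $1$ and $\Phi$ equal to the fragility of the unique $1$-child otherwise, independence of the children yields at an $\mathrm{AND}$ gate
\begin{equation*}
s_h(d)=1-\bigl(1-s_{h-1}(d)\bigr)^2,
\end{equation*}
and at an $\mathrm{OR}$ gate, writing $c_{h-1}^{(k)}=\Pr[\Phi_{h-1}=k\mid V_{h-1}=1]$,
\begin{equation*}
a_h\,s_h(d)=a_{h-1}^2\!\!\sum_{\substack{j+k\le d\\ j,k\ge1}}\!\! c_{h-1}^{(j)}c_{h-1}^{(k)}\;+\;2\,a_{h-1}b_{h-1}\,s_{h-1}(d).
\end{equation*}
The crucial feature is the convolution bound $\sum_{j+k\le d} c_{h-1}^{(j)}c_{h-1}^{(k)}\le s_{h-1}(d-1)^2$, valid because $j,k\ge1$ together with $j+k\le d$ force $j,k\le d-1$.

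The conclusion then follows by induction on $d$. For the base case $d=1$ the convolution is empty (two positive integers cannot sum to $1$), so the $\mathrm{OR}$ recursion reduces to $a_h s_h(1)=2a_{h-1}b_{h-1}s_{h-1}(1)$, whence $s_h(1)\to0$ along even heights because of the factor $b_{h-1}\to0$; the $\mathrm{AND}$ recursion $s_h(1)=1-(1-s_{h-1}(1))^2$ then propagates this to odd heights, so $s_h(1)\to0$ at all heights. For the inductive step, assuming $s_h(d-1)\to0$ as $h\to\infty$ over all heights, the convolution term in the $\mathrm{OR}$ recursion is bounded by $s_{h-1}(d-1)^2\to0$ while the remaining term carries the vanishing factor $b_{h-1}$; hence $s_h(d)\to0$ along even heights, and the $\mathrm{AND}$ recursion again carries this to odd heights. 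Therefore $s_n(d)\to0$ for all $n$, giving $F_n(d,p)\to0$.

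I expect the main obstacle to be conceptual rather than computational: a one-dimensional fixed-point analysis of $s_h(d)$ does not close, because the $\mathrm{OR}$ gate couples $s_h(d)$ to the entire fragility profile $(c^{(1)},\dots,c^{(d)})$ through the convolution. Resolving this coupling is exactly what the induction on $d$ and the bound $\sum_{j+k\le d}c^{(j)}c^{(k)}\le s_{h-1}(d-1)^2$ accomplish, reducing the $d$-th statement to the $(d-1)$-st together with the concentration $b_h\to0$. A secondary point requiring care is to verify that the vanishing ``escape factor'' truly appears at the correct gate in each regime: it is $b_{h-1}$ at the $\mathrm{OR}$ gate when $p>\varphi$, and dually $a_{h-1}$ at the $\mathrm{AND}$ gate when $p<\varphi$. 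This is precisely what makes the argument break down at $p=\varphi$, where neither factor vanishes and the value fails to concentrate.
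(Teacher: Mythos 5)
Your proof is correct, and it takes a genuinely different route from the paper's. The paper's own argument is, in essence, a blocking argument at the top of the tree: for $p>\varphi$, fix $c$ with $2^{\lfloor c/2\rfloor}>d$ and cut the tree at height $n-c$; each of the $2^{c}$ subtrees rooted there has value $1$ with probability tending to $1$ (Proposition~\ref{pro:p}), so by a union bound over this \emph{constant} number of nodes, with probability tending to $1$ all of them have value $1$; on that event every node in the top $c$ levels has value $1$, and since flipping the root from $1$ to $0$ requires flipping both children at each Player-1 (OR) node and at least one child at each Player-2 (AND) node, the losing player must flip at least $2^{\lfloor c/2\rfloor}>d$ of those subtree values, hence at least that many leaves, so the game is not $d$-fragile. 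Your argument replaces this structural step by exact gate recursions for the conditional fragility law ($\Phi=\min(\Phi_L,\Phi_R)$ at an AND gate with two $1$-children, $\Phi=\Phi_L+\Phi_R$ at an OR gate with two $1$-children, and the fragility of the unique $1$-child otherwise) together with an induction on $d$, using the empty convolution at $d=1$ and the bound $\sum_{j+k\le d}c^{(j)}c^{(k)}\le s_{h-1}(d-1)^2$; I checked the recursions, the parity bookkeeping, and the duality for $p<\varphi$, and they are sound. What each buys: the paper's argument is shorter and isolates the combinatorial reason for robustness (doubling at OR levels), while yours is more systematic --- it uses only the qualitative limit from Proposition~\ref{pro:p}, it reuses the same recursive machinery as Theorem~\ref{thm:main} (your recursions are precisely the $p\neq\varphi$ analogue of the $\alpha$-recursions there), and it yields quantitative finite-$n$ recursions for $F_n(d,p)$ rather than only the limit. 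Two negligible loose ends on your side: the conditional probabilities $s_h,t_h$ are undefined in the degenerate cases $p\in\{0,1\}$ (where the statement is immediate anyway), and you should state explicitly that $\Phi_h\ge 1$ always, which is what makes the $d=1$ convolution empty.
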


This proposition stands in a sharp contrast to golden games where this limit approaches 1 very quickly.

\section{Proofs}
In this section, we will restate the results and prove them. As before, let $G_n=G_n(p)$ denote an instance of the 2-player random game with starting position the root of a binary tree of rank $n$, and with terminal positions the $2^n$ leaves.\\

\noindent {\bf Proposition~\ref{pro:p}.} {\it Consider $G_n(p)$, with $p\in[0,1]$. 
\begin{itemize}
    \item[(i)] For $p>\varphi$, then $\lim_{n\rightarrow \infty} \Pr[V_n(p)=1]=1$.
    
    \item[(ii)] For $p<\varphi$, then $\lim_{n\rightarrow \infty} \Pr[V_n(p)=0]=1$.
    
    \item[(iii)] For $p=\varphi$, the starting player wins with probability $\varphi$, i.e. for all $n\in \N_0$,  $\Pr[V_{2n}(\varphi)=1]=\Pr[V_{2n+1}(\varphi)=0)]=\varphi$.  
\end{itemize}
}

\begin{proof}
We begin by proving item (i) and item (ii) is similar. 
Let $g(x)$ denote the probability that Player~2 wins $G_2(1-x)$. Then $g(x)=(1-(1-x)^2)^2$, and note that $x$ is the probability that Player~2 wins $G_0(1-x)$.

Similarly, if we let $g^n=g\circ g^{n-1}$, for all $n>0$, then $g^n(x)$ is the probability that Player~2 wins $G_{2n}(1-x)$. 
 
 Consider any fixed $0<\delta< \varphi$. We demonstrate that $g^n(\varphi^2+\delta)$ is increasing. Observe that $g(x)=x^4-4x^3+4x^2$, so it suffices to show that $x^4-4x^3+4x^2>x$ on the open interval $(\varphi^2,1)$, that is that $f(x)=x^3-4x^2+4x>1$. One can check that $f(1)=f(\varphi^2)=1$, and that a local maximum is at $x=2/3$, i.e $f$ is increasing on $(\varphi^2,2/3)$ and decreasing on $(2/3,1)$, which settles this part. 

Since $g$ is increasing, it must converge, and we are interested in a number $1\ge\gamma =\lim_n g^n(\varphi^2+\delta)$. Namely, we want to show that $\gamma =1$ if $\delta>0$. To this purpose, we find all the fixed points of the equation $x=(1-(1-x)^2)^2$, that is $x=0$ or we solve the third degree equation $1=4x-4x^2+x^3$, so that $x=1$, $x=\frac{3-\sqrt{5}}{2}$ or $x=\frac{3+\sqrt{5}}{2}$. The third solution is too large, the second one will be too small, by $\frac{3-\sqrt{5}}{2} = \varphi^2$ and $\delta>0$. The first solution will be our fixed point $\gamma =1$.

 For item (iii), the base cases are $\Pr[V_0=1]=\varphi$, and $\Pr[V_1=0]=1-\varphi^2=\varphi$. Namely player Player~2 cannot force a win of $G_1$ if and only if each child of $G_1$ has payoff ``0''. By induction, we assume that $\Pr[V_n=1]=\varphi$ if $n$ is even. Then $\Pr[V_{n+1}=0]=1-\varphi^2=\varphi$, by the previous argument. And reversely, by induction, we assume that $\Pr[V_n=0]=\varphi$ if $n$ is odd. Then $\Pr[V_{n+1}=1]=1-\varphi^2=\varphi$.
\end{proof}

The notion of fragility is conditioned on one of the player losing, and being able to convert the value.
In the proofs to come it turns out to be more convenient to work with the reverse concept `robustness', i.e. `you are winning and I am not able to convert the value of the game'.
%In the proofs to come, for the asymptotic probability of fragility it is most convenient to work with conditional probabilities.
\begin{definition}\label{def:alphabeta}
For a given fragility number $d\in\N$, and any $n\in\N$, let $\alpha_n(d)=\Pr [G_n\not\in \F(d)\mid V_n=1]$ and $\beta_n(d)=\Pr [G_n\not\in \F(d)\mid V_n=0]$. 
\end{definition}
That is, the $\alpha$ sequence concerns the conditional probability of Player~2 losing, and not being able to change the value of the game to a ``0'', by flipping at most $d$ payoffs; the $\beta$ sequence concerns the conditional probability of Player~1 losing, and not being able of changing the value to a ``1'', by flipping at most $d$ payoffs. %As we noted in Example~\ref{ex:fra},  $\varphi^3=\alpha_1\ne\beta_1=0$. However, we 
%Will show that, for all n  $\alpha_{2n-1}$

%For each $d\in\N$ and $n\in \N_0$, let $a_n(d) = F_{2n}(d)$, and let  $a(d)=\lim_{n\rightarrow\infty} a_{n}(d)$, if the limit exists. Similarly, let $b_n(d) = F_{2n+1}(d)$, and let  $b(d)=\lim_{n\rightarrow\infty} b_{n}(d)$, if the limit exists. 

For each fragility number $d\in\N$, define functions $z_0,z_1:\N_0\rightarrow \R$ by 
\begin{align}\label{eq:z}
z_0(d)&=\lim_{n\rightarrow\infty}(\varphi (1-\alpha_{2n}(d))+\varphi^2(1-\beta_{2n}(d))),
\end{align}
and
\begin{align}\label{eq:z}
z_1(d)&=\lim_{n\rightarrow\infty}(\varphi^2 (1-\alpha_{2n+1}(d))+\varphi(1-\beta_{2n+1}(d))),
\end{align}
if the limits exist. 

%For example $z_0(1) = 1$, because 
We will show that the limits exist, and that in fact $z(d)=z_0(d)=z_1(d)$. Clearly, if this holds, then $z(d)$ equals \emph{the limiting probability} that $G_n(\varphi)$ is $d$-fragile, i.e. $F(d) = z(d)$ is the limiting probability that the losing player can win by flipping at most $d$ payoffs. 

Thus, for each $d$, it suffices to find explicit values for $\lim\alpha_{2n}(d)$ and $\lim\alpha_{2n+1}(d)$ to find $F(d)$. We will show that these values exist and can be computed recursively by using roots of quadratic equations derived from similar roots on smaller fragility numbers. 

First, we prove the main result (Theorem~\ref{thm:main}) for the case of 1-fragility. This helps to outline the main ideas. Let $G^L_n$  denote the game that starts with the left child of $G_n$ as root (where the payoffs remain the same), and let $V_n^L$ denote the game value of $G_n^L$, and analogously for $G_n^R$. Thus $G^L_n$ and $G^R_n$ have depths $n-1$. The fragility of $G_n$ naturally depends on the fragilities of $G^L_n$  and $G_n^R$, with specified details as in the below proofs.

\begin{proposition}\label{pro:1fr}
The asymptotic probability of $1$-fragility of a golden game exists and it satisfies 
\begin{align*}
    F(1)%&=\varphi (1-\xi)+\varphi^2(1-\xi^2)\\
    &=1-\varphi \xi -\varphi^2\xi^2
\end{align*}
where $\xi\in (0,1)$ is the smaller root of the second degree polynomial $\varphi^3-x+2\varphi^2x^2$.
%Moreover, let $x=\varphi^3+2\varphi^2x^2$. 
Moreover $\xi = \lim \alpha_{2n}=\lim\beta_{2n-1}\approx 0.309017$ and $\xi^2=\lim \alpha_{2n-1}=\lim \beta_{2n}\approx 0.0954915 $.
\end{proposition}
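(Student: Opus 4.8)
The plan is to turn a single-leaf flip into a recursion on the tree, exploiting the alternating OR/AND structure. Recall that at an even-depth root Player~1 moves, so $V_n=V_n^L\vee V_n^R$ (an OR node), while at an odd-depth root Player~2 moves, so $V_n=V_n^L\wedge V_n^R$ (an AND node), and the children $G_n^L,G_n^R$ are independent copies of $G_{n-1}$. The crucial elementary observation is that flipping one leaf changes a payoff inside exactly one child, hence it can alter at most one of $V_n^L,V_n^R$. Working with the robustness quantities $\alpha_n(1),\beta_n(1)$ of Definition~\ref{def:alphabeta}, I would run the following case analysis at an OR node with $V_n=1$: if both children have value $1$, the node can never be turned to $0$ by one flip, whereas if exactly one child has value $1$ the node is robust iff that child is robust; and at an OR node with $V_n=0$ (which forces both children to value $0$) the node is robust iff both children are robust. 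The dual analysis holds at AND nodes. Combining this with the conditional child distributions from Proposition~\ref{pro:p} (for an odd child $\Pr[V=1]=\varphi^2$, for an even child $\Pr[V=1]=\varphi$) and the identity $\varphi+\varphi^2=1$ yields the coupled recursions
\begin{align*}
\alpha_{2n} &= \varphi^3 + 2\varphi^2\,\alpha_{2n-1}, & \alpha_{2n+1} &= \alpha_{2n}^2,\\
\beta_{2n+1} &= \varphi^3 + 2\varphi^2\,\beta_{2n}, & \beta_{2n} &= \beta_{2n-1}^2,
\end{align*}
with base values $\alpha_0=\beta_0=0$ (the single-leaf game is always flippable).

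Eliminating the odd-index terms collapses both the even $\alpha$-subsequence and the odd $\beta$-subsequence to iterations of the single map $g(x)=\varphi^3+2\varphi^2 x^2$, namely $\alpha_{2n}=g(\alpha_{2n-2})$ and $\beta_{2n+1}=g(\beta_{2n-1})$. I would then analyse $g$ directly: it is increasing and convex on $[0,1]$, its fixed points are exactly the two roots of $2\varphi^2x^2-x+\varphi^3=0$ (whose smaller root is $\xi$), and $g(x)>x$ for $x<\xi$. Starting from $0<\xi$, the sequence $\alpha_{2n}$ is therefore monotonically increasing and bounded above by $\xi$ (since $g$ is increasing and fixes $\xi$), so it converges to the unique fixed point in $[0,\xi]$, which is $\xi$; the larger root is excluded since it lies above this invariant interval. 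The same argument gives $\lim\beta_{2n-1}=\xi$, and then $\lim\alpha_{2n-1}=\lim\alpha_{2n-2}^2=\xi^2$ and $\lim\beta_{2n}=\lim\beta_{2n-1}^2=\xi^2$, which are the four limits claimed.

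Finally I would assemble $F(1)$. Since for even $n$ one has $\Pr[V_n=1]=\varphi$ and $\Pr[V_n=0]=\varphi^2$, the even-depth fragility probability is $\varphi(1-\alpha_{2n})+\varphi^2(1-\beta_{2n})$, whose limit is $z_0(1)=\varphi(1-\xi)+\varphi^2(1-\xi^2)$; using $\varphi+\varphi^2=1$ this simplifies to $1-\varphi\xi-\varphi^2\xi^2$. The odd-depth expression $z_1(1)=\varphi^2(1-\xi^2)+\varphi(1-\xi)$ gives the identical value, so the two parity limits agree and both equal $F(1)$, as required.

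The step I expect to be the main obstacle is the flip-propagation case analysis, and in particular verifying that the two subtrees remain independent copies of $G_{n-1}$ after conditioning on the root value and on the pattern of child values, so that the conditional robustness of the relevant child is exactly $\alpha_{n-1}$ or $\beta_{n-1}$; keeping the parities and the conditional child distributions straight is where errors would most easily creep in. By contrast the convergence step is routine once the monotonicity of $g$ and the location of its two fixed points relative to the starting point $0$ are pinned down.
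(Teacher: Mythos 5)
Your proposal is correct and follows essentially the same route as the paper: the same conditional robustness quantities $\alpha_n(1),\beta_n(1)$, the same OR/AND case analysis producing the recurrences $\alpha_{2n}=\varphi^3+2\varphi^2\alpha_{2n-1}$, $\alpha_{2n-1}=\alpha_{2n-2}^2$ (with the dual $\beta$-recurrences), and the same assembly of $F(1)$ from the parity limits $z_0,z_1$. If anything, your convergence step is more careful than the paper's: the iterates of $g(x)=\varphi^3+2\varphi^2x^2$ starting at $0$ are \emph{increasing} and bounded above by $\xi$ (the paper asserts the sequence is decreasing, which is a slip), and your fixed-point argument correctly identifies the limit as the smaller root, excluding the larger root $1$.
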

\begin{proof}
Consider 1-fragility. Clearly $\alpha_0=\beta_0=0$ (conditioning on that the other player wins the losing player can flip the single payoff and win), and we wish to prove that, for all $n>0$, 
\begin{itemize}
\item[(i)] $\alpha_{2n}=\varphi^3+2\varphi^2\alpha_{2n-1}$ and $\alpha_{2n-1}=\alpha_{2n-2}^2$ 
\item[(ii)] $\beta_{2n+1}=\varphi^3+2\varphi^2\beta_{2n}$ and $\beta_{2n}=\beta_{2n-1}^2$
\end{itemize}
It suffices to prove (i), and (ii) is similar.
%Then, for all $i\ge 0$,  $x_i=\Pr[G_i(\varphi)\not\in \F(1)\mid v(G_i)=0]$. 

%The base case is correct, because, conditioning on that $v(G_0)=0$, then Player~1 flips the single 0-valued payoff of $G_0$. 
For $G_i$ with $i>0$, we condition on that $V_i=0$, and motivate (i) by the equivalent recurrences, $\alpha_0=0$ and for $n>0$
\begin{align}\label{eq:2n}
\alpha_{2n}=\frac{\varphi^4+2\varphi^3\alpha_{2n-1}}{\varphi} 
\end{align}
and 
\begin{align}\label{eq:2n-1}
\alpha_{2n-1}=\frac{\varphi^2(\alpha_{2n-2})^2}{\varphi^2}
\end{align}
\vspace{1 mm}

\noindent Case $i=2n$: In this case, since Player~1 starts, we prove that $\alpha_{2n}$ is $$\Pr[G_{2n}\not\in\F(1)\mid V_{2n}=1]= \frac{\Pr[\text{Player~2 cannot change each child that has value 1}]}{\Pr[V_{2n}=1]}$$ By Proposition~\ref{pro:p} (iii), $\Pr[V_{2n}=1] = \varphi$, so the conditioning probability is correct. 

The first term in the numerator of \eqref{eq:2n}, 
\begin{align*}
\varphi^4&=\Pr[V_{2n-1}=1]^2
%&=(\varphi^2)^2, 
\end{align*}
corresponds to the probability $(\varphi^2)^2$ that both children  $G^L_{2n}$  and $G^R_{2n}$ have value 1 (again Proposition~\ref{pro:p} (iii)), in which case the value of $G_{2n}$, cannot be changed, since $d=1$; at most one payoff at leaf level can be flipped so at most one of $V^L_{2n}$  and $V^R_{2n}$ can change, but since $2n$ is even, Player~1 starts, and thus can choose $G^L_{2n}$ or $G^R_{2n}$ as appropriate. 

The second term in the numerator of \eqref{eq:2n} corresponds to the situation where exactly one child has value 1, and we consider the probability that this child cannot be changed (two ways). For this term, we use induction. Thus \eqref{eq:2n} is correct.\\

\noindent Case $i=2n-1$: In this case, since Player~2 is the starting player, $\alpha_{2n-1}$ is $$\Pr[G_{2n-1}\not\in\F\mid V_{2n-1}=1] = \frac{\Pr[\text{Player~2 cannot change one child}]}{\Pr[V_{2n-1}=1]}$$

By Proposition~\ref{pro:p}, $\Pr[V_{2n-1}=1] = \varphi^2$, so the conditioning probability is correct. Moreover,  the conditioning implies that both children have value 0, since Player~1 starts; i.e. $V^L_{2n-1}=V^R_{2n-1}=1$. Since the depth of the children is even, by Proposition~\ref{pro:p}, this probability is $\varphi^2$. By induction, the probability that neither child can be changed is $(\alpha_{2n-2})^2$. Thus \eqref{eq:2n-1} is correct.

Finally, by induction $\alpha_{2n-2} \le \varphi$ implies that the sequence $(\alpha_{2n})$ is decreasing, and thus $(\alpha_{2n-1})$ is also decreasing. Since $(\alpha_i)$ is bounded below (by 0), the limits $\alpha_{2n}$ and $\alpha_{2n-1}$ exist, and thus $F(1)=z(1)$, as in (\ref{eq:z}), is correctly defined, namely take $\xi=\lim\alpha_{2n}$ and $\xi^2=\lim\alpha_{2n-1}$, i.e. $a(1) = 1 - \xi $ and $b(1) = 1 - \xi^2$. %(by using again Proposition~\ref{pro:p}).
\end{proof}
\iffalse
The situation is similar whenever Player~2 is the adversary. The recurrence is the same, expect that the indexes flip parity as compared with the case of Player~1 being the adversary. Note though, that the probabilities of 1-fragilities differ for distinct games. For example $T_1$ is 1-fragile with probability 1 when Player~1 is adversary: conditioning on that the root has value 0, implies that both children has payoff 0, but it suffices to change one of them to guarantee a win for Player~1. When Player~2 is adversary, the corresponding probability is $1-(\varphi^2)^2/\varphi =1-\varphi^3=\varphi+\varphi^4>0.69$, because the conditioning on a root of value 1, includes the case where both children have payoff 1, and this happens with probability $(\varphi^2)^2$.

\begin{theorem}\label{thm:advEVEN}
For $n\ge 0$, let $x_{2n+1}=\varphi^3+2\varphi^2x_{2n}$ and $x_{2n+2}=(x_{2n+1})^2$, with $x_0=0$. 
Then $x_i=\Pr[G_i(\varphi)\not\in F(1)\mid v(G_i)=1]$. 
By solving $x=\varphi^3+2\varphi^2x^2$, we find $x=\lim x_{2n-1}\approx 0.309014 $ and $x^2=\lim x_{2n}\approx 0.09458965 $.  
\end{theorem}
\begin{proof}
The proof is analogous to that of Theorem~\ref{thm:advODD}.
\end{proof}
\fi

The generalization of this result, to $d$-fragility, relies on a counting argument of pairs of 2-partitions of natural numbers. Again, without loss of generality (by Proposition~\ref{pro:p}), in the proofs we will condition on that Player~2 loses.

The new case concerns $G_{2n}, n>0$. If both children have value 1, then Player~2 must change both to win a modified game; this is a special case of $d$-fragility, which relies on a combinatorial counting argument. (In the below proof, we will let $H(d, 2n)$ denote the probability that the value of at least one child of $G_{2n}$ cannot be changed, that is, that such an instance of $G_{2n}$ is not $d$-fragile.) 
\begin{lemma}\label{lem:counting}
Fix any $d\in\N$. Consider the set $S=S_d$ of all ordered pairs $(r, s)$, with $r+s=d$. Now study the set $R=R_d=\{(x,y)\mid x,y\in S, x\ne y\}$ of ordered pairs of such ordered pairs, except the pairs of identical ordered pairs. Let $T =T_d= \{(\min\{x,u\}, \min\{y,v\})\mid ((x,y),(u,v))\in R \}$. Then $T$ contains exactly all the pairs (r,s), with $1<r+s<d$. Moreover $|T|=(d-1)(d-2)/2$.
\end{lemma}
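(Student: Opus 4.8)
The plan is to collapse the two-coordinate description of $S_d$ to a single parameter and then reduce everything to an elementary count. Write an element of $S=S_d$ as $(x,d-x)$; since $r+s=d$ with positive components, $x$ ranges over $\{1,\dots,d-1\}$ (I read the components as $\ge 1$, which is the only reading consistent with the asserted value $|T|=(d-1)(d-2)/2$), so $S$ is parametrized by its first coordinate, $|S|=d-1$, and $(x,d-x)\neq(u,d-u)$ iff $x\neq u$. First I would compute the image map explicitly: fix $\big((x,d-x),(u,d-u)\big)\in R$ with $x\neq u$ and set $a=\min\{x,u\}$, $b=\max\{x,u\}$, so $1\le a<b\le d-1$. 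The one genuine observation is the coordinate flip $\min\{d-x,d-u\}=d-\max\{x,u\}=d-b$, which ties the second coordinate of the image to the \emph{larger} of $x,u$; hence the pair contributes the point $(a,d-b)$ to $T$. Both entries are positive ($a\ge1$ and $b\le d-1$), and the sum $a+(d-b)=d-(b-a)$ runs through $\{2,\dots,d-1\}$ as $b-a$ runs through $\{1,\dots,d-2\}$. This gives the inclusion $T\subseteq\{(r,s):r,s\ge1,\ 1<r+s<d\}$.

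Next I would prove the reverse inclusion by exhibiting a preimage for each target $(r,s)$ with $r,s\ge1$ and $r+s<d$: the elements $(r,d-r)$ and $(d-s,s)$ both lie in $S$, are distinct because $r+s<d$ forces $r<d-s$, and map precisely to $(r,s)$. The two inclusions together identify $T$ as exactly the set of pairs with $1<r+s<d$.

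For the cardinality I would observe that this same recovery rule $(r,s)\mapsto\{(r,d-r),(d-s,s)\}$ inverts the image map at the level of \emph{unordered} pairs — the image map is symmetric in its two arguments, hence constant on the two orderings of a given $2$-element subset of $S$ — so it sets up a bijection between $T$ and the $2$-element subsets of $S$; as $|S|=d-1$ this yields $|T|=\binom{d-1}{2}=(d-1)(d-2)/2$. A purely arithmetic alternative is to count, for each admissible sum $k\in\{2,\dots,d-1\}$, the $k-1$ pairs $(r,s)$ with $r,s\ge1$ and $r+s=k$, and sum $\sum_{k=2}^{d-1}(k-1)=(d-1)(d-2)/2$. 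The work here is essentially bookkeeping; the only step with real content is the coordinate-flip identity, and the main thing to watch is pinning down the component range as $\ge1$ and checking \emph{both} inclusions, since allowing a zero component would wrongly admit image pairs of sum $0$ or $1$ and break the clean count.
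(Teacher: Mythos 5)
Your proof is correct. There is, however, nothing in the paper to compare it against: the paper's entire proof of this lemma is the single sentence ``This is standard combinatorics,'' so your writeup in effect supplies the argument the authors omitted. The key steps all check out: reading the components as $\ge 1$ is the right convention (it matches the paper's own Example, where for $d=4$ one has $S=\{(1,3),(3,1),(2,2)\}$, excluding $(0,4)$ and $(4,0)$); the coordinate-flip identity $\min\{d-x,d-u\}=d-\max\{x,u\}$ correctly identifies the image of a pair as $(a,d-b)$ with $a=\min\{x,u\}$, $b=\max\{x,u\}$; the preimage construction $(r,s)\mapsto\{(r,d-r),(d-s,s)\}$ gives the reverse inclusion; and the observation that this recovery rule inverts the (symmetric) image map on $2$-element subsets of $S$ yields $|T|=\binom{d-1}{2}$ cleanly. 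Your check against $d=4$ and the degenerate cases $d=1,2$ (where $T=\emptyset$, consistent with the formula) would be worth one sentence if this were written up, but as it stands the proposal is a complete and correct proof of a statement the paper takes on faith.
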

\begin{proof}
This is standard combinatorics.
\end{proof}
We illustrate Lemma~\ref{lem:counting} with an example.
\begin{example}\label{ex:d4}
Let $d=4$. Then $S=\{(1,3),(3,1),(2,2)\}$, \\ $R=\{\{(1,3),(3,1)\},\{(2,2),(3,1)\},\{(2,2),(1,3)\}\}$, and $T=\{(1,1),(1,2),(2,1)\}$.
\end{example}
Let us explain the idea of how we use Lemma~\ref{lem:counting} in a game tree $G$ of even depth. If $V^L =V^R=0$, then in case of fragility, Player~1 must flip a combination of $(s,r)$ payoffs, with $s+r\le d$, where $0<s<d$ leaves belong to the the sub game tree $G^L$, and where $0<r<d$ leaves belong to the sub game tree $G^R$. For example, if $d=4$ and $G$ is fragile with $(r,s)=(1,2)$, then it is also fragile for $(2,2)$ and $(1,3)$. In the other direction, if $G$ is fragile for both $(2,2)$ and $(1,3)$ then it is also fragile for (1,2) (and perhaps for $(r,s) = (1,1)$). In our special case of binary trees and 2-partitioning, by Lemma~\ref{lem:counting}, the inclusion-exclusion principle leads to a neat  succinct description, as stated in \eqref{eq:inclexcl} below. 
%For the general main result, the following counting argument is essential.\\

We restate the main theorem with some notation to be used in its proof.\\

\noindent{\bf Theorem~\ref{thm:main}.}
{\it The asymptotic probability of $d$-fragility of a golden game $F(d)$ satisfies, for all $d\in\N$, 
\begin{align*}
    F(d)%&=\varphi (1-\xi_d)+\varphi^2(1-\xi_d^2)\\
    &=1-\varphi \xi_d -\varphi^2\xi_d^2
\end{align*}
where $\xi_d\in (0,1)$ is the smaller root of the second degree polynomial $\varphi^3H(d)-x+2\varphi^2x^2$, where $H(1)=1$, and for $d>1$
\begin{align}\label{eq:inclexcl}
H(d) = 1-C(d)+C(d-1), 
\end{align}
where $C(1)=1$, and for $d>1$,
  \begin{align*}
    C(d)&=\sum_{r+s=d, \,0<r<d}(1-\xi_r^2)(1-\xi_s^2)
    %&=1-\varphi \xi_d -\varphi^2\xi_d^2
\end{align*}
}
\begin{proof}
The proof is similar to that of Proposition~\ref{pro:1fr}; it remains to justify the factor $H(d)$ in \eqref{eq:inclexcl}, and argue that the $x$-sequence is decreasing for any $d\ge 1$. 

Consider $d$-fragility, and recall Definition~\ref{def:alphabeta}. Clearly $\alpha_0=\beta_0=0$, and we wish to prove that, for all $n\ge 0$, 
\begin{itemize}
\item[(i)] $\alpha_{2n}=\varphi^3H(d,2n)+2\varphi^2\alpha_{2n-1}$ and $\alpha_{2n-1}=\alpha_{2n-2}^2$ 
\item[(ii)] $\beta_{2n+1}=\varphi^3H(d,2n+1)+2\varphi^2\beta_{2n}$ and $\beta_{2n}=\beta_{2n-1}^2$,
\end{itemize}
%It suffices to prove (i) and (ii) is similar.

where, for all $i\in\N_0$, 
\begin{align}\label{eq:ninclexcl}
H(d,i) = 1-C(d,i)+C(d-1,i), 
\end{align}
and where, for all $i$, $H(1,i)=C(1,i)=1$, and otherwise 
  \begin{align*}
    C(d,i)&=\sum_{r+s=d, \,0<r<d}(1-\xi_{r,i}^2)(1-\xi_{s,i}^2),
    %&=1-\varphi \xi_d -\varphi^2\xi_d^2
\end{align*}
where for all $0<r<d$, by induction $\alpha_{n,r}$ satisfies the following recurrences; $\alpha_{0,r}=0$ and for $n>0$
\begin{align}\label{eq:d2n}
\alpha_{2n,r}=\frac{\varphi^4H(r,2n)+2\varphi^3\alpha_{2n-1,r}}{\varphi} 
\end{align}
and 
\begin{align}\label{eq:d2n-1}
\alpha_{r, 2n-1}=\frac{\varphi^2(\alpha_{r,2n-2})^2}{\varphi^2}
\end{align}
As before, by symmetry, it suffices to study the cases where we condition on that Player~2 loses, and thus we are concerned with the $\alpha$ sequence. Therefore, it suffices to prove that $\lim_{n\rightarrow\infty}H(d,2n) = H(d)$ exists, and that indeed $H(d,2n)$ counts the probability of non-fragility for the case of two 1-valued children. Let us follow the outline of the proof of the $d=1$ case, and detail where the differences occur.\\

\noindent Case $i=2n$ \eqref{eq:d2n}: In this case, since Player~1 starts, $\alpha_{2n}=\alpha_{r,2n}$ is $$\Pr[G_{2n}\not\in\F(d)\mid V_{2n}=1]= \frac{\Pr[\text{Player~2 cannot change each child that has value 1}]}{\Pr[V_{2n}=1]}$$ By Proposition~\ref{pro:p} (iii), $\Pr[V_{2n}=1] = \varphi$, so the conditioning probability is correct. 

The first term in the numerator of \eqref{eq:d2n}, 
\begin{align*}
\varphi^4&=\Pr[V_{2n}^L=V_{2n}^R=1]
%&=(\varphi^2)^2, 
\end{align*}
corresponds to the probability $(\varphi^2)^2$ that both children $G_{2n}^{\,L}$  and $G_{2n}^{\,R}$ have value 1 (again Proposition~\ref{pro:p} (iii)). Each term of $C(d,i)$ defines the probability of $G^L_i$ $r$-fragile and $G^R_i$ $s$-fragile. Here we use Lemma~\ref{lem:counting}, to see that indeed $H(d,i)$ is correctly defined. Namely, the over counting of $C(d,i)$ contains all terms $C(r,i)$, $0<r<d$, but with alternating signs. When we subtract $C(d-1,i)$ from $C(d,i)$, then all the smaller terms cancel, which proves the correctness of \eqref{eq:ninclexcl}. 

% in which case the value of $G_{2n}$, cannot be changed, since $d=1$; at most one payoff at leaf level can be flipped so at most one of $V^L_{2n}$  and $V^R_{2n}$ can change, but since $2n$ is even, Player~2 starts, and thus can choose $G^L_{2n}$ or $G^R_{2n}$ as appropriate. 

The second term in the numerator of \eqref{eq:d2n} corresponds to the situation where exactly one child has value 1, and this case is in direct analogy with the proof of Proposition~\ref{pro:p}.

Finally, by induction $\alpha_{2n-2} \le \varphi$ and $H(d,2n)$ trivially decreasing (the probability that both children are fragile increases with increasing $d$) imply that the sequence $(\alpha_{2n})$ is decreasing, and thus $(\alpha_{2n-1})$ is also decreasing. Since $(\alpha_i)$ is bounded below (by 0), the limits $\alpha_{2n}$ and $\alpha_{2n-1}$ exist, and thus $z$ is correctly defined (by using again Proposition~\ref{pro:p}).
\end{proof}

Next we show that non-golden games are asymptotically `robust', i.e. the probability of $G_n(p)\in\F$ tends to 0 for the $p$-Binomial distribution whenever $p \ne \varphi$. It suffices to study the case $p >\varphi $, by symmetry and  Proposition~\ref{pro:p}.\\ %The argument generalizes to show that any finite adversarial change at leaf level does not affect the game value in the limit.

\noindent {\bf Proposition~\ref{pro:only-golden}.}
\emph{Consider $p$-Binomial distribution whenever $p \ne \varphi$. %Adversarial change of a single payoff does not affect the game value in the limit. 
For any given $d\in\N$, the probability of $d$-fragility tends to 0.}
\begin{proof}
By symmetry, it suffices to study the case $p >\varphi $. By Proposition~\ref{pro:p}, Player~2 loses with  asymptotic probability~1. 

Proposition~\ref{pro:p} implies that for any probability $p'<1$, and any fragility number $d$ there is a tree $G_n(p)$, and a hight $2k < n$, such that the probability is $p'$ that no single node at hight $2k+1$, has value 1. The probability of $d$-fragility of the value of such a game $G_{n}$ is smaller than $1-p'$. Since $p'$ is arbitrary close to $1$, the result follows.
\end{proof}

\noindent{\bf Proposition~\ref{pro:d->inf}.}
\emph{The asymptotic probability of $d$-fragility satisfies $\lim_{d\rightarrow \infty} F(d)=1$.}
\begin{proof}
With notation as in Theorem~\ref{thm:main}, it suffices to prove that $H(d)\rightarrow 0$, as $d\rightarrow\infty$. Clearly $H(d)\rightarrow 0$ if and only if $\xi_d\rightarrow 0$. Hence, for a contradiction, suppose that there is a $1>\delta>0$ such that $\xi_d\downarrow \delta > 0$, when $d\rightarrow \infty$ (monotonicity implying convergence was already discussed in the proof of Theorem~\ref{thm:main}). $H(d)$ is the asymptotic probability that the losing player cannot change the value of each child, by flipping altogether at most $d$ payoffs. 

Hence, for each (sufficiently small) $\epsilon>0$, there is a $d$ such that 
\begin{align}
\delta &>\varphi^3(1-(1-\delta-\epsilon)^2) + 2\varphi^2\xi_d^2\label{eq:1}\\
&\ge  \varphi^3H(d) + 2\varphi^2\xi_d^2\label{eq:2}\\
&=\xi_d, \notag
\end{align}
which contradicts the definition of $\delta$. The inequality \eqref{eq:1} follows since both children cannot be changed, and by choosing $d$ sufficiently large so that $\xi_d$ is sufficiently close to $\delta$. Then inequality \eqref{eq:2} follows by definition of $H(d)$ for a sufficiently large $d$.  
\end{proof}

\end{document}